\newtheorem{theorem}{Theorem}
\begin{document}
%
\title{Performance Analysis of User-centric Virtual Cell Dense Networks over mmWave Channels}

\author{\IEEEauthorblockN{Jianfeng Shi, Yinlu Wang, Hao Xu and Ming Chen}
\IEEEauthorblockA{National Mobile Communications Research Laboratory\\
Southeast University, Nanjing, China\\
Email: \{shijianfeng, yinluwang, \\xuhao2013, chenming\}@seu.edu.cn}
\and
\IEEEauthorblockN{Benoit Champagne}
\IEEEauthorblockA{Department of Electrical and Computer Engineering\\
	 McGill University, Montreal, Canada\\
Email: benoit.champagne@mcgill.ca}
}


%


\maketitle

\begin{abstract}
This paper analyzes the ergodic capacity of a user-centric virtual cell (VC) dense network, where multiple access points (APs) form a VC for each user equipment (UE) and transmit data cooperatively over millimeter wave (mmWave) channels.
Different from traditional microwave radio communications, blockage phenomena have an important effect on mmWave transmissions.
Accordingly, we adopt a distance-dependent line-of-sight (LOS) probability function and model the locations of the LOS and non-line-of-sight (NLOS) APs as two independent non-homogeneous Poisson point processes (PPP).
Invoking this model in a VC dense network, new expressions are derived for the downlink ergodic capacity, accounting for: blockage, small-scale fading and AP cooperation.
In particular, we compare the ergodic capacity for different types of fading distributions, including Rayleigh and Nakagami.
Numerical results validate our analytical expressions and show that AP cooperation can provide notable capacity gain, especially in low-AP-density regions.
\end{abstract}


%
\IEEEpeerreviewmaketitle

\section{Introduction}
\label{sec_intdn}
The demands of higher data rates for local areas services increase significantly, which trigger interests in research for the more spectral efficiency (SE) and energy efficiency (EE) system \cite{shi2017power,pan2017joint,Pan2017user}.
To overcome the situation, user-centric virtual cell (VC) networking has been advocated as one of the key breakthrough technologies for the fifth generation (5G) wireless networks \cite{boccardi2014five}.
In a user-centric VC network, several access points (APs) are distributed over a given coverage area and connected to a central controller via high-speed links.
Each user equipment (UE) is served by its surrounding APs in a cooperative way.
Besides, since the current microwave radio spectrum (from 300 MHz to 6 GHz) is scarce, it becomes vital to exploit the millimeter wave (mmWave) spectrum band (from 28 to 300 GHz).
In user-centric networks, the massive deployment of small VCs renders the short-range mmWave technologies very attractive \cite{bogale2016massive}.

The performance of user-centric networks using microwave transmissions was extensively studied in \cite{Jia2016performance,yang2016performance,Zhao2015cluster,wang2016downlink,Khan2015performance,Peng2014ergodic}, where
the expressions for the outage probability, coverage probability, successful access probability (SAP), SE, EE, and ergodic rate were deduced.
Specifically, by modeling the locations of APs as a marked poisson point process, closed-form expressions for the coverage probability, SE and EE were derived in \cite{Jia2016performance}.
To obtain tractable analytical expressions for the outage probability and ergodic rate, the Gauss-Chebyshev integration technique was applied in  \cite{yang2016performance}.
An explicit expression for the SAP was derived in \cite{Zhao2015cluster}, where SAP is defined as the conditional probability that the signal-to-interference-plus-noise ratio (SINR) exceed a threshold, given a predetermined set of serving APs.
In \cite{wang2016downlink}, the authors investigated the achievable ergodic
rate of each user in a VC-based distributed antenna system, where the users and antennas are uniformly distributed.
The downlink performance of a cloud radio access network (CRAN) with randomly distributed multiple antenna APs was investigated in \cite{Khan2015performance}, where closed-form expressions (either exact or approximate) for the outage probability were derived for three different transmission and AP selection schemes, differing in the number and choice of APs used to serve a particular user.
Under similar system model assumptions as in \cite{Khan2015performance}, a closed-form expression for the uplink ergodic capacity\footnote{Generally, ergodic capacity is studied based on the
	assumption that channel fading transitions through all possible
	fading states, and therefore this definition may not be practical for source transmission with fixed delay constraints\cite{Choudhury2007information}.} was derived in \cite{Peng2014ergodic}.

In contrast to traditional microwave transmissions, mmWave communications
exhibit strong directionality and suffer from severe path loss \cite{andrews2017modeling}.
Thus, the analytical expressions and methods in \cite{Jia2016performance,yang2016performance,Zhao2015cluster,Khan2015performance,Peng2014ergodic} cannot be applied directly in mmWave networks.
The performance of mmWave cellular networks was studied in \cite{singh2015tractable,bai2015coverage,he2016performance,Maamari2016coverage}.
Specifically, \cite{singh2015tractable} and \cite{bai2015coverage} proposed general tractable models to characterize the coverage and rate distribution in mmWave cellular networks with and without self-backhauling, respectively.
Expressions for the SINR and rate coverage probability were also derived as a function of the antenna geometry and base station density.
\cite{he2016performance} analyzed the outage performance of a mmWave CRAN and compared the ergodic capacity of line-of-sight (LOS) and non-line-of-sight (NLOS) APs.
The coverage probability in the downlink of mmWave heterogeneous networks with AP cooperation was recently studied in \cite{Maamari2016coverage}.
While considering a general multiple-input multiple-output (MIMO) transceiver model, integral expressions for the coverage probabilities are developed for the single antenna case.
Furthermore, while different fading distributions are considered for the desired links, Rayleigh fading is assumed for the interference links.


In this paper, motivated by these considerations, we analyze the ergodic capacity performance of a user-centric VC dense network operating in the mmWave spectrum band.
Specifically, we focus on the downlink VC network, where a typical UE chooses a fixed number of closest APs to form its VC.
Considering several important aspects, such as the AP location randomness, distance-dependent path loss, small-scale fading, directional beamforming and the AP cooperation, we derive the analytical expressions for the ergodic capacity in such networks.
More specifically, the ergodic capacity is analyzed for three types of small-scale fading distributions (i.e., Nakagami, Rayleigh and no fading) via stochastic geometry.
Numerical results validate our analytical expressions and show that the AP cooperation can provide significant capacity gain in a low-AP-density region.

In Section \ref{sec_sys}, we introduce the downlink user-centric VC network under study and specify the path loss, beamforming and SINR models within the mmWave framework.
Section \ref{sec_perays} analyzes the ergodic capacity of a typical UE under the three considered fading distributions.
Supporting numerical results along with discussions are presented in Section \ref{sec_numrst}.
Finally, Section \ref{sec_clsn} concludes the paper.

\section{System Model}
\label{sec_sys}
\subsection{User-centric VC Network}
We consider the downlink of a user-centric VC dense network, where a typical UE located at the origin\footnote{When the UEs are distributed as an independent stationary point process, the ergodic capacity of the typical UE located at the origin is identical to that of other UEs in the network \cite{zhang2015downlink,bai2015coverage}.}
is surrounded by multiple APs deployed according to a two dimensional homogeneous Poisson point process (PPP), i.e., $\Phi=\{X_k,~\forall k\}$ with density $\lambda$.
Let $r_k=|X_k|$ denote the distance between the UE and the $k$th AP. Without loss of generality, we let the APs be indexed in increasing order of distance, i.e. $r_k < r_l$ for $k<l$.
All the APs are connected to a central controller via high speed dedicated links (e.g., fiber optics)
and share the same resources (time or frequency) to transmit data.
Based on external measurements, the UE is assumed to select the $K$ closest APs to form its VC, as illustrated in Fig.~\ref{fig_sysmod}.
Hence, the typical UE is served by its corresponding set of APs, $\mathcal{V}_0 = \{1,2,\cdots,K\}$,
while it suffers interference from the APs which do not belong to $\mathcal{V}_0$.
Because of the blockage effect, an AP can be either LOS or NLOS to the UE.
However, with dense deployment, it is reasonable to assume that the link between any serving AP to the UE is LOS\footnote{This assumption will be verified by the simulation results in Section \ref{sec_numrst}.}.

\begin{figure}
	\centering
	\includegraphics[width=3.2in]{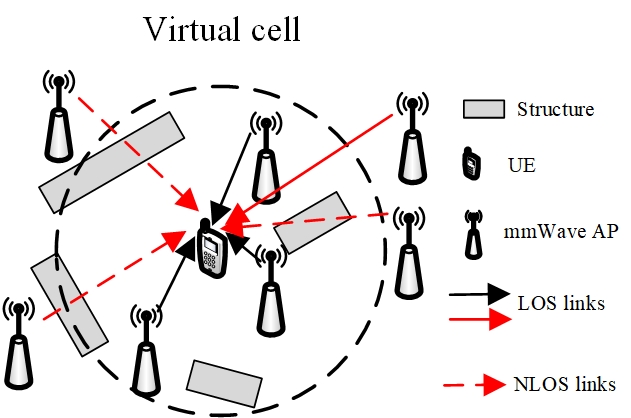}
	\caption{Illustration of a downlink VC network with $K=3$ AP selected. Black and red arrows refer to desired and interference links, respectively.\label{fig_sysmod}}
\end{figure}

\subsection{Transmission Model}
\label{subsec_path_beam}
\subsubsection{Small-scale Fading}
Let $\xi_k$ denote the small-scale complex fading coefficient betweeh the UE and the $k$th AP.
In this paper, we consider three small-scale fading distributions.
Firstly, in the case that the fading is Nakagami distributed, $|\xi_k|^2$ is a normalized Gamma random variable (r.v.) with $|\xi_k|^2\sim \Gamma(N_{\rm L},1/N_{\rm L})$ when the $k$th link is LOS and $|\xi_k|^2\sim \Gamma(N_{\rm N},1/N_{\rm N})$ when it is NLOS \cite{bai2015coverage}.
Here, $N_{\rm L}$ and $N_{\rm N}$ are the Nakagami parameters for LOS and NLOS links, respectively.
Secondly, under the Rayleigh fading distribution, $\xi_k$ is modeled as a zero mean complex Gaussian r.v. and $|\xi_k|^2\sim\exp(\mu)$, where $\mu$ is the parameter of the exponential distribution.
Thirdly, without small-scale fading, we ignore $\xi_k$ by letting $|\xi_k|^2=1$ in the corresponding formulas.

\subsubsection{Path Loss}
In mmWave communications, the path loss effects are quite different from those occurring in the traditional microwave radio band, due to the more serious absorption and blockage effects.
Define the LOS probability function $p(r)$ as the probability that a link of length $r$ is LOS.
Then, the NLOS probability of a link is $1-p(r)$.
According to \cite{bai2014analysis}, we assume that the blockages are modeled as a rectangle Boolean scheme.
That is, $p(r)=e^{-\beta r}$, where $\beta$ is the blockage parameter determined by the density and average size of the blockages.
Given a link of length $r$, its path loss function $L(r)$ is given by
\begin{equation}\label{equ_pathloss}
L(r)=\left\{
\begin{aligned}
&C_{\rm L}r^{-\alpha_{\rm L}},~\text{with probability}~ p(r) \\
&C_{\rm N}r^{-\alpha_{\rm N}},~\text{with probability} ~1-p(r),
\end{aligned}
\right.
\end{equation}
where $\alpha_{\rm L}$ and $\alpha_{\rm N}$ are the LOS and NLOS path loss exponents.
$C_{\rm L}$ and $C_{\rm N}$ are the intercepts of the LOS and NLOS path loss function, respectively.

\subsubsection{Directional Beamforming}
We assume that all APs are equipped with directional antennas, which approximatively follow a sectored antenna model.
For simplicity, we assume that the APs all have the same beamwidth, which is denoted by $\theta_b$.
Then, the antenna gain for a mmWave AP can be written as follows\cite{singh2015tractable},
\begin{equation*}
A_b(\theta)=\begin{cases}
M,\quad \text{if}\; |\theta|<\theta_b/2\\
m,\quad \text{otherwise},
\end{cases}
\end{equation*}
where $\theta$ is the angle of departure measured from boresight direction, $M$ is the main lobe gain and $m$ is the sidelobe gain.
The antenna gain $A_u(\theta)$ at UE side can be modeled in the same manner, whereas we assume omni-directional antennas for the users and $A_u(\theta)=1$ in this paper.
Assume that the serving APs for the typical UE adjust their beam angles to achieve the maximum antenna gains.
Then, we have $G_{k}=M, \forall k\in \mathcal{V}_0$, where $G_{k}$ is the antenna gain of the $k$th AP.
Furthermore, for the interference links, the interfering APs' angles follow the independent uniform distribution in $(-\pi,\pi]$.
As a result, the antenna gain of the interference link $G_{l}, \forall l \notin \mathcal{V}_0$ is a discrete r.v., whose probability distribution is given by
$\mathbb{P}(G_l=a_n)=b_n,\; n=1,2$.
Here, $\mathbb{P}(\cdot)$ denotes the probability of an event, $a_1 = M$, $b_1=\theta_b/{2\pi}$, $a_2=m$ and $b_2=1-\theta_b/{2\pi}$.

\subsubsection{SINR}
%
We assume that all APs transmit with the same power $P_t$.
Then, the SINR at the typical UE can be expressed as \cite{lin2014downlink}
\begin{align}\label{SINR}
\gamma&=\frac{P}{P^{I}+\sigma^2} \nonumber\\
&=\frac{\sum_{k\in \mathcal{V}_0}G_kL(r_{k})|\xi_{k}|^2}{\sum_{l\notin \mathcal{V}_0}G_{l}L(r_{l})|\xi_{l}|^2+\sigma^2},
\end{align}
where $P$, $P^{I}$ and $\sigma^2$ are the desired signal, interference and background noise power at the UE, respectively.
Note that these quantities are normalized by the transmitting power $P_t$ and the SINR in (\ref{SINR}) is a r.v., because of the randomness in the antenna gain $G_{l}$, distance $r_{l}$ and small-scale fading $\xi_{l}$.

\section{Performance Analysis}
\label{sec_perays}
In this section, after introducing preliminary mathematical notions and definitions, 
we derive the ergodic capacity for the typical UE in a downlink VC network under under three small-scale fading distributions, i.e., Nakagami, Rayleigh and no fading.

\subsection{Preliminaries}

Let $\Phi_{\rm L}$ and $\Phi_{\rm N}$ be the point process of the LOS and NLOS APs, respectively.
With negligible loss in accuracy \cite{bai2015coverage}, $\Phi_{\rm L}$ and $\Phi_{\rm N}$ can be modeled as two independent non-homogeneous PPP with density function $\lambda p(r)$ and $\lambda (1-p(r))$, respectively.
Then, the SINR can be reformulated into
\begin{align}\label{SINR_ref}
\gamma=\frac{\sum_{k\in\Phi\cap\mathcal{B}(r_K)}MC_{\rm L}r_k^{-\alpha_{\rm L}}|\xi_{k}|^2}{I_{\rm L}+I_{\rm N}+\sigma^2},
\end{align}
where $\mathcal{B}(r_K)$ denotes the circle centered at the origin of radius $r_K$, $I_{\rm L}=\sum_{l\in \Phi_{\rm L}\cap\bar{\mathcal{B}}(r_K)}G_{l}C_{\rm L}r_l^{-\alpha_{\rm L}}|\xi_{l}|^2$ and $I_{\rm N}=\sum_{l\in \Phi_{\rm N}\cap\bar{\mathcal{B}}(r_K)}G_{l}C_{\rm N}r_l^{-\alpha_{\rm N}}|\xi_{l}|^2$ are the interference powers from the LOS and NLOS APs, and
$\bar{\mathcal{B}}(r_K)$ represents the region outside of $\mathcal{B}(r_K)$.

Accordingly, the ergodic capacity (bps/Hz) of the typical UE is defined as
\begin{equation}\label{def_C}
C\triangleq\mathbb{E}_{\mathbf{G},\mathbf{r},\bm \xi}[\log_2(1+\gamma)],
\end{equation}
where $\mathbf G=\{G_l, \forall l\}$, $\mathbf r=\{r_l, \forall l\}$ and $\bm \xi=\{\xi_l, \forall l\}$.
Given that the serving APs of the typical UE are at distances of $r_1,\cdots,r_K$, the ergodic capacity can be rewritten as
\begin{equation}\label{ref_C}
C=\idotsint_{\mathcal{D}}C_{\text{cond}}(\bm{r})f_{\bm{r}}(\bm{r})\text{d}\bm{r},
\end{equation}
where the multiple integral domain is $\mathcal{D}=\{0<r_1\leqslant\cdots\leqslant r_K\}$, $\text{d}\bm{r}=\text{d}r_1\cdots \text{d}r_K$, $C_{\text{cond}}(\bm{r})$ is the conditional ergodic capacity.
$f_{(\bm{r})}(\bm{r})$ is the joint probability density function (PDF) of $r_1,\cdots,r_K$, given by \cite[(30)]{Moltchanov2012distance}
\begin{equation}\label{Jpdf_r1K}
f_{(\bm{r})}(\bm{r})=(2\pi\lambda)^Kr_1\cdots r_Ke^{-\pi\lambda r_K^2}.
\end{equation}


\subsection{Ergodic Capacity under Nakagami Fading}
In this subsection, we analyze the ergodic capacity when the small-scale fading is Nakagami distributed.
Recall that
$|\xi_l|^2$ is a normalized Gamma r.v. with $|\xi_l|^2\sim \Gamma(N_{\rm L},1/N_{\rm L})$ when the $l$th link is LOS and $|\xi_l|^2\sim \Gamma(N_{\rm N},1/N_{\rm N})$ when it is NLOS.
The ergodic capacity of user-centric VC dense network with Nakagami fading is given in Theorem \ref{teom_C_cond_Nakagami}.
\begin{theorem}\label{teom_C_cond_Nakagami}
	The conditional ergodic capacity with Nakagami small-scale fading is
	\begin{align}\label{fmula_C_cond_Naka}
	C_{\emph{cond}}(\bm{r})=&\frac{1}{\ln2}\int_0^\infty\frac{e^{-s\sigma^2}}{s}e^{-(Q_{\rm L}(s)+Q_{\rm N}(s))}\times\nonumber\\
	&\left(1-\prod_{k=1}^{K}(1-F(N_{\rm L},sMC_{\rm L}r_k^{-\alpha_{\rm L}}))\right)\emph{d} s,
	\end{align}
	where
	\begin{equation*}
	Q_{\rm L}(s)=2\pi\lambda \sum_{n=1}^{2}b_n\int_{r_K}^{\infty}F(N_{\rm L},sa_nC_{\rm L}x^{-\alpha_{\rm L}})p(x)x\emph{d}x,
	\end{equation*}
	\begin{equation*}
	Q_{\rm N}(s)=2\pi\lambda\sum_{n=1}^{2} b_n\int_{r_K}^{\infty}F(N_{\rm N},sa_nC_{\rm N}x^{-\alpha_{\rm N}})(1-p(x))x\emph{d}x,
	\end{equation*}
	$F(N,x)=1-1/(1+x/N)^N$, $a_n$ and $b_n$ are defined in Subsection \ref{subsec_path_beam}.
	By substituting (\ref{Jpdf_r1K}) and (\ref{fmula_C_cond_Naka}) into (\ref{ref_C}), the downlink ergodic capacity with Nakagami fading is given by
	\begin{align}\label{fmula_C_Naka}
	&C=\frac{(2\pi\lambda)^K}{\ln2}\idotsint_{\mathcal{D}}r_1\cdots r_Ke^{-\pi\lambda r_K^2}\int_0^\infty\frac{e^{-s\sigma^2}}{s}\times\nonumber\\
	&e^{-(Q_{\rm L}(s)+Q_{\rm N}(s))}
	\left(1\!\!-\!\!\prod_{k=1}^{K}(1\!\!-\!\!F(N_{\rm L},sMC_{\rm L}r_k^{-\alpha_{\rm L}}))\right)\emph{d} s \emph{d}\bm{r}.
	\end{align}
\end{theorem}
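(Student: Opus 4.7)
The plan is to evaluate $C_{\text{cond}}(\bm r) = \mathbb{E}[\log_2(1+\gamma)\,|\,\bm r]$ by rewriting $\log_2(1+\gamma)$ as a single integral in a Laplace-transform parameter $s$ and then exploiting independence to factor the resulting expectation into the Laplace transform of the signal part and that of the interference part. Since $\gamma = P/(P^I+\sigma^2)$ with $P$ and $P^I$ conditionally independent (the serving APs lie inside $\mathcal{B}(r_K)$, the interfering LOS/NLOS APs lie outside, and the small-scale fadings and antenna gains are independent across APs), I would first invoke the Frullani identity
\begin{equation*}
\ln\!\bigl(1+\tfrac{P}{P^I+\sigma^2}\bigr)=\int_0^\infty\frac{e^{-s\sigma^2}}{s}\,e^{-sP^I}\bigl(1-e^{-sP}\bigr)\,\mathrm{d}s,
\end{equation*}
and take the conditional expectation, passing it inside the integral by Fubini. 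This yields
\begin{equation*}
C_{\text{cond}}(\bm r)=\frac{1}{\ln 2}\int_0^\infty \frac{e^{-s\sigma^2}}{s}\,\mathcal{L}_{P^I}(s)\bigl(1-\mathcal{L}_P(s)\bigr)\,\mathrm{d}s,
\end{equation*}
which already has the outer shape of (\ref{fmula_C_cond_Naka}).

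Next I would compute $\mathcal{L}_P(s)$. Since the desired signal is $P=\sum_{k=1}^K MC_{\rm L}r_k^{-\alpha_{\rm L}}|\xi_k|^2$ with independent $|\xi_k|^2\sim\Gamma(N_{\rm L},1/N_{\rm L})$ (all serving links being LOS by assumption), the Gamma MGF gives $\mathbb{E}[e^{-sMC_{\rm L}r_k^{-\alpha_{\rm L}}|\xi_k|^2}] = (1+sMC_{\rm L}r_k^{-\alpha_{\rm L}}/N_{\rm L})^{-N_{\rm L}} = 1-F(N_{\rm L},sMC_{\rm L}r_k^{-\alpha_{\rm L}})$. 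Taking the product over $k$ yields exactly the bracketed factor $1-\prod_{k=1}^K(1-F(N_{\rm L},sMC_{\rm L}r_k^{-\alpha_{\rm L}}))$ in (\ref{fmula_C_cond_Naka}).

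For the interference term, I would write $\mathcal{L}_{P^I}(s)=\mathcal{L}_{I_{\rm L}}(s)\mathcal{L}_{I_{\rm N}}(s)$ using independence of $\Phi_{\rm L}$ and $\Phi_{\rm N}$, and then apply the probability generating functional (PGFL) of each non-homogeneous PPP restricted to $\bar{\mathcal{B}}(r_K)$, with densities $\lambda p(r)$ and $\lambda(1-p(r))$ respectively. For $I_{\rm L}$ the PGFL gives
\begin{equation*}
\mathcal{L}_{I_{\rm L}}(s)=\exp\!\left(-\lambda\!\int_{\bar{\mathcal{B}}(r_K)}\!p(|x|)\,\bigl(1-\mathbb{E}_{G,\xi}[e^{-sGC_{\rm L}|x|^{-\alpha_{\rm L}}|\xi|^2}]\bigr)\mathrm{d}x\right).
\end{equation*}
Averaging the Gamma MGF over the two-point antenna-gain law $(a_n,b_n)$ of Subsection \ref{subsec_path_beam} turns the inner expectation into $1-\sum_{n=1}^2 b_n F(N_{\rm L},sa_nC_{\rm L}|x|^{-\alpha_{\rm L}})$, so after passing to polar coordinates the exponent becomes precisely $-Q_{\rm L}(s)$. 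The same computation with $(\alpha_{\rm L},C_{\rm L},p)$ replaced by $(\alpha_{\rm N},C_{\rm N},1-p)$ and $N_{\rm L}$ by $N_{\rm N}$ produces $e^{-Q_{\rm N}(s)}$. Dividing by $\ln 2$ to convert from $\ln$ to $\log_2$ gives (\ref{fmula_C_cond_Naka}); (\ref{fmula_C_Naka}) then follows immediately by substituting (\ref{Jpdf_r1K}) and (\ref{fmula_C_cond_Naka}) into (\ref{ref_C}).

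The main obstacle I anticipate is the bookkeeping around independence and conditioning: I must argue that conditioning on $\bm r$ leaves the outer interferers distributed as independent thinned PPPs on $\bar{\mathcal{B}}(r_K)$ (so that the PGFL applies), that the LOS/NLOS splitting, the per-link fadings $|\xi_l|^2$, and the interferer antenna gains $G_l$ are mutually independent across $l$, and that $P$ and $P^I$ are themselves independent conditional on $\bm r$. Once these independences are cleanly justified, Fubini on the Frullani integral and the two PGFL evaluations are essentially routine.
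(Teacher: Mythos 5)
Your proposal is correct and follows essentially the same route as the paper's own proof: the Hamdi-type integral identity $\ln(1+x)=\int_0^\infty z^{-1}(1-e^{-xz})e^{-z}\,\mathrm{d}z$ (your ``Frullani'' step), Fubini, factorization into $\mathcal{L}_{I_{\rm L}}(s)\mathcal{L}_{I_{\rm N}}(s)(1-\mathcal{L}_P(s))$ by conditional independence, the Gamma MGF for the serving links, and the PGFL of the two thinned non-homogeneous PPPs on $\bar{\mathcal{B}}(r_K)$ with gain averaging over $(a_n,b_n)$. No gaps worth flagging; the independence bookkeeping you list is exactly what the paper implicitly assumes.
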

\begin{proof}
	The proof is given in Appendix \ref{proof_th1}.
\end{proof}

%

\subsection{Ergodic Capacity under Rayleigh Fading}
In this subsection, we analyze the ergodic capacity when the small-scale fading is Rayleigh distributed and $|\xi_l|^2\sim\exp(\mu)$.
The ergodic capacity of user-centric VC dense network with Rayleigh fading is given in Theorem \ref{teom_C_cond_Rayleigh}.
\begin{theorem}\label{teom_C_cond_Rayleigh}
	The conditional ergodic capacity with Rayleigh small-scale fading is
	\begin{align}\label{fmula_C_cond_Ray}
	C_{\emph{cond}}(\bm{r})=&\frac{1}{\ln2}\int_0^\infty\frac{e^{-s\sigma^2}}{s}e^{-(V_{\rm L}(s)+V_{\rm N}(s))}\times\nonumber\\
	&\left(1-\prod_{k=1}^{K}(1-H(\mu sMC_{\rm L}r_k^{-\alpha_{\rm L}}))\right)\emph{d} s,
	\end{align}
	where
	\begin{equation*}
	V_{\rm L}(s)=2\pi\lambda\sum_{n=1}^{2} b_n\int_{r_K}^{\infty}H(\mu sa_nC_{\rm L}x^{-\alpha_{\rm L}})p(x)x\emph{d}x,
	\end{equation*}
	\begin{equation*}
	V_{\rm N}(s)=2\pi\lambda\sum_{n=1}^{2} b_n\int_{r_K}^{\infty}H(\mu sa_nC_{\rm N}x^{-\alpha_{\rm N}})(1-p(x))x\emph{d}x,
	\end{equation*}
	$H(x)=1-1/(1+x)$, $a_n$ and $b_n$ are defined in Subsection \ref{subsec_path_beam}.
	By substituting (\ref{Jpdf_r1K}) and (\ref{fmula_C_cond_Ray}) into (\ref{ref_C}), the downlink ergodic capacity with Rayleigh fading is given by
	\begin{align}\label{fmula_C_Ray}
	&C=\frac{(2\pi\lambda)^K}{\ln2}\idotsint_{\mathcal{D}}r_1\cdots r_Ke^{-\pi\lambda r_K^2}\int_0^\infty\frac{e^{-s\sigma^2}}{s}\times\nonumber\\
	&e^{-(V_{\rm L}(s)+V_{\rm N}(s))}
	\left(1\!\!-\!\!\prod_{k=1}^{K}(1\!\!-\!\!H(\mu sMC_{\rm L}r_k^{-\alpha_{\rm L}}))\right)\emph{d} s\emph{d}\bm{r}.
	\end{align}
\end{theorem}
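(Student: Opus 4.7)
The plan is to follow the same argument structure as the proof of Theorem~\ref{teom_C_cond_Nakagami}, replacing the Gamma Laplace transform $(1+x/N)^{-N}$ by the exponential Laplace transform $(1+\mu t)^{-1}$ wherever the small-scale fading enters. The Rayleigh case is essentially the $N_{\rm L}=N_{\rm N}=1$ specialization of the Nakagami derivation, with $s$ scaled by the exponential parameter $\mu$ inside each occurrence of $H(\cdot)$ rather than inside $F(N,\cdot)$.

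First I would start from $C_{\text{cond}}(\bm r)=\mathbb{E}[\log_2(1+P/(I_{\rm L}+I_{\rm N}+\sigma^2))\mid\bm r]$ and invoke the identity $\ln(1+y/z)=\int_0^{\infty}s^{-1}(e^{-sz}-e^{-s(y+z)})\,ds$ for $y,z>0$. Applying Fubini to swap the $s$-integral with the expectations and exploiting the independence of $P$, $I_{\rm L}$ and $I_{\rm N}$ (which follows from the independence of $\Phi_{\rm L}$, $\Phi_{\rm N}$, the gains $G_l$ and the fading marks $\xi_l$) gives
\begin{equation*}
C_{\text{cond}}(\bm r)=\frac{1}{\ln 2}\int_0^{\infty}\frac{e^{-s\sigma^2}}{s}\,\mathcal{L}_{I_{\rm L}}(s)\,\mathcal{L}_{I_{\rm N}}(s)\bigl(1-\mathcal{L}_{P\mid\bm r}(s)\bigr)\,ds.
\end{equation*}
The rest of the proof is the evaluation of the three Laplace transforms.

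Next I would compute $\mathcal{L}_{P\mid\bm r}(s)$. Because the $K$ serving links are assumed LOS and the $|\xi_k|^2\sim\exp(\mu)$ are independent across $k$, the single-variable exponential MGF $(1+\mu t)^{-1}$ yields $\mathcal{L}_{P\mid\bm r}(s)=\prod_{k=1}^{K}(1+\mu sMC_{\rm L}r_k^{-\alpha_{\rm L}})^{-1}=\prod_{k=1}^{K}\bigl(1-H(\mu sMC_{\rm L}r_k^{-\alpha_{\rm L}})\bigr)$, which reproduces the bracketed product in (\ref{fmula_C_cond_Ray}). For $\mathcal{L}_{I_{\rm L}}(s)$ and $\mathcal{L}_{I_{\rm N}}(s)$ I would use the probability generating functional of a non-homogeneous PPP applied to $\Phi_{\rm L}\cap\bar{\mathcal{B}}(r_K)$ and $\Phi_{\rm N}\cap\bar{\mathcal{B}}(r_K)$, with intensities $\lambda p(x)$ and $\lambda(1-p(x))$ respectively. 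For a generic interferer the mark $G_l$ is discrete with $\mathbb{P}(G_l=a_n)=b_n$, and $|\xi_l|^2$ is independent exponential, so averaging first over $\xi_l$ and then over $G_l$ turns the PGFL integrand $1-\mathbb{E}[\exp(-sG_lC_{\rm L}x^{-\alpha_{\rm L}}|\xi_l|^2)]$ into $\sum_{n=1}^{2}b_n H(\mu sa_nC_{\rm L}x^{-\alpha_{\rm L}})$. Plugging this into the PGFL exponent gives $\mathcal{L}_{I_{\rm L}}(s)=e^{-V_{\rm L}(s)}$, and the analogous computation on the NLOS process yields $\mathcal{L}_{I_{\rm N}}(s)=e^{-V_{\rm N}(s)}$. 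Combining the three factors produces (\ref{fmula_C_cond_Ray}); finally, substituting into (\ref{ref_C}) together with the joint PDF (\ref{Jpdf_r1K}) of the ordered distances $r_1,\ldots,r_K$ delivers (\ref{fmula_C_Ray}).

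The step that needs the most care is the interference Laplace transform: one has to justify the interchange of expectations over the PPP atoms, the discrete gain marks, and the continuous fading marks, and to verify that after blockage thinning the LOS and NLOS processes really are independent non-homogeneous PPPs so that $\mathcal{L}_{I_{\rm L}}(s)\mathcal{L}_{I_{\rm N}}(s)$ factors. One must also be careful that the serving-signal computation correctly uses the ``all serving links are LOS'' assumption, so that only the LOS path-loss exponent $\alpha_{\rm L}$ appears in $\mathcal{L}_{P\mid\bm r}(s)$ and the LOS interference integral is taken over $\bar{\mathcal{B}}(r_K)$ to avoid double counting. Once these points are in place, the remaining algebra is the exponential specialization of the Nakagami calculation already carried out in Theorem~\ref{teom_C_cond_Nakagami}.
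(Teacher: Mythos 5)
Your proposal is correct and follows essentially the same route as the paper: the Frullani/Hamdi integral identity to express $\log_2(1+\gamma)$ via Laplace transforms, the exponential (i.e., $N=1$ Gamma) Laplace transform for the serving-signal product, the PPP probability generating functional with the discrete gain marks for $\mathcal{L}_{I_{\rm L}}(s)=e^{-V_{\rm L}(s)}$ and $\mathcal{L}_{I_{\rm N}}(s)=e^{-V_{\rm N}(s)}$, and finally averaging over the ordered-distance PDF (\ref{Jpdf_r1K}). This mirrors the paper's Appendix, which likewise reuses the Nakagami master formula and only recomputes the three Laplace transforms for Rayleigh fading.
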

\begin{proof}
	The proof is given in Appendix \ref{proof_th2}.
\end{proof}

\subsection{Ergodic Capacity with no fading}
In this subsection, we study the ergodic capacity neglecting the small-scale fading.
The SINR can be rewritten as
\begin{align}\label{SINR_nofading}
\gamma'=\frac{\sum_{k\in\Phi\cap\mathcal{B}(r_K)}MC_{\rm L}r_k^{-\alpha_{\rm L}}}{I'_{\rm L}+I'_{\rm N}+\sigma^2},
\end{align}
where $I'_{\rm L}=\sum_{l\in \Phi_{\rm L}\cap\bar{\mathcal{B}}(r_K)}G_{l}C_{\rm L}r_l^{-\alpha_{\rm L}}$ and $I'_{\rm N}=\sum_{l\in \Phi_{\rm N}\cap\bar{\mathcal{B}}(r_K)}G_{l}C_{\rm N}r_l^{-\alpha_{\rm N}}$.
Then, the ergodic capacity of user-centric VC dense network is given in Theorem \ref{teom_C_cond_nofading}.
\begin{theorem}\label{teom_C_cond_nofading}
	The conditional ergodic capacity without considering small-scale fading is
	\begin{align}\label{fmula_C_cond_nofading}
	C_{\emph{cond}}(\bm{r})=&\frac{1}{\ln2}\int_0^\infty\frac{e^{-s\sigma^2}}{s}e^{-(W_{\rm L}(s)+W_{\rm N}(s))}\times\nonumber\\
	&\left(1-\prod_{k=1}^{K}e^{-sMC_{\rm L}r_k^{-\alpha_{\rm L}}}\right)\emph{d} s,
	\end{align}
	where
	\begin{equation*}
	W_{\rm L}(s)=2\pi\lambda\sum_{n=1}^{2} b_n\int_{r_K}^{\infty}\left(1-e^{-sa_nC_{\rm L}x^{-\alpha_{\rm L}}}\right)p(x)x\emph{d}x,
	\end{equation*}
	\begin{equation*}
	W_{\rm N}(s)=2\pi\lambda\sum_{n=1}^{2} b_n\int_{r_K}^{\infty}\left(1-e^{-sa_nC_{\rm N}x^{-\alpha_{\rm N}}}\right)(1-p(x))x\emph{d}x,
	\end{equation*}
	$a_n$ and $b_n$ are defined in Subsection \ref{subsec_path_beam}.
	By substituting (\ref{Jpdf_r1K}) and (\ref{fmula_C_cond_nofading}) into (\ref{ref_C}), the downlink ergodic capacity with no fading is given by
	\begin{align}\label{fmula_C_nofading}
	C=&\frac{1}{\ln2}\idotsint_{\mathcal{D}}(2\pi\lambda)^Kr_1\cdots r_Ke^{-\pi\lambda r_K^2}\int_0^\infty\frac{e^{-s\sigma^2}}{s}\times\nonumber\\
	&e^{-(W_{\rm L}(x)+W_{\rm N}(x))}
	\left(1-\prod_{k=1}^{K}e^{-sMC_{\rm L}r_k^{-\alpha_{\rm L}}}\right)\emph{d} s\emph{d}\bm{r}.
	\end{align}
\end{theorem}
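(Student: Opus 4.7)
The plan is to reproduce the structure of the proofs of Theorems~\ref{teom_C_cond_Nakagami} and \ref{teom_C_cond_Rayleigh}, but with the small-scale fading coefficients removed so that, conditioned on $\bm r$, the signal power $P$ becomes deterministic. The starting point is the Hamdi-type identity
\begin{equation*}
\ln(1+\gamma') = \int_0^\infty \frac{1}{s}\left(e^{-s(P^{I}+\sigma^2)} - e^{-s(P+P^{I}+\sigma^2)}\right)\mathrm{d}s,
\end{equation*}
valid for nonnegative $P,P^{I}$, which when inserted into (\ref{def_C}) and combined with Fubini gives
\begin{equation*}
C_{\text{cond}}(\bm r) = \frac{1}{\ln 2}\int_0^\infty \frac{e^{-s\sigma^2}}{s}\,\mathbb{E}\!\left[e^{-sI'_{\rm L}}\right]\mathbb{E}\!\left[e^{-sI'_{\rm N}}\right]\left(1-\mathbb{E}\!\left[e^{-sP}\,\big|\,\bm r\right]\right)\mathrm{d}s.
\end{equation*}
Here I use that $\Phi_{\rm L}\cap\bar{\mathcal B}(r_K)$ and $\Phi_{\rm N}\cap\bar{\mathcal B}(r_K)$ are independent (by the independent thinning of the PPP into LOS/NLOS components), and that both are independent of the serving AP configuration inside $\mathcal B(r_K)$.

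Next I would handle the three factors. For the signal factor, with no fading the conditional signal is $P=\sum_{k=1}^{K} M C_{\rm L} r_k^{-\alpha_{\rm L}}$, so $\mathbb{E}[e^{-sP}\mid\bm r]=\prod_{k=1}^{K}e^{-sMC_{\rm L}r_k^{-\alpha_{\rm L}}}$, which yields the bracketed factor in (\ref{fmula_C_cond_nofading}). For each interference factor I apply the probability generating functional (PGFL) of the non-homogeneous PPP $\Phi_{\rm L}$ of density $\lambda p(x)$ (and analogously $\Phi_{\rm N}$) together with independent averaging over the discrete antenna gain $G_l\in\{a_1,a_2\}$ with probabilities $\{b_1,b_2\}$. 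Concretely, using the mark-independent PGFL in polar coordinates,
\begin{equation*}
\mathbb{E}[e^{-sI'_{\rm L}}] = \exp\!\left(-2\pi\lambda\!\int_{r_K}^{\infty}\!\!\left(1-\sum_{n=1}^{2}b_n e^{-sa_nC_{\rm L}x^{-\alpha_{\rm L}}}\right)p(x)\,x\,\mathrm{d}x\right),
\end{equation*}
and the inner difference rearranges to $\sum_n b_n(1-e^{-sa_nC_{\rm L}x^{-\alpha_{\rm L}}})$, matching $W_{\rm L}(s)$ exactly. The NLOS interference yields $W_{\rm N}(s)$ by the same argument with $(\alpha_{\rm N},C_{\rm N},1-p(x))$ substituted in.

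Assembling these three pieces produces (\ref{fmula_C_cond_nofading}); then averaging over $\bm r$ using the joint density (\ref{Jpdf_r1K}) via (\ref{ref_C}) gives (\ref{fmula_C_nofading}). The main subtlety, and the only step that is genuinely different from the Nakagami/Rayleigh proofs, is the first one: without fading the signal is not a sum of exponential or Gamma variables, so the usual ``$\mathcal L_{I}(s)\cdot\mathcal L_{\text{signal}}(s)$'' coverage-probability trick is unavailable and one must invoke the Hamdi identity directly on $\ln(1+\gamma')$ to still obtain a Laplace-transform representation. Once that representation is in place, the PGFL computation of the interference and the trivial Laplace transform of the deterministic signal are routine, and the independence required to factor $\mathbb{E}[e^{-s(P+P^{I})}\mid\bm r]$ into three pieces follows from the disjointness of $\mathcal B(r_K)$ and $\bar{\mathcal B}(r_K)$ together with the independent-thinning property of the PPP.
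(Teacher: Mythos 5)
Your proposal is correct and follows essentially the same route as the paper: the Hamdi-type integral identity applied to $\ln(1+\gamma')$, the deterministic signal giving $\mathbb{E}[e^{-sP}\mid\bm r]=\prod_{k=1}^{K}e^{-sMC_{\rm L}r_k^{-\alpha_{\rm L}}}$, the PGFL of the thinned non-homogeneous PPPs with gain averaging yielding $W_{\rm L}(s)$ and $W_{\rm N}(s)$, and finally averaging over $\bm r$ via (\ref{Jpdf_r1K}). One small remark: the step you single out as ``genuinely different'' is not really a departure, since the paper's Nakagami and Rayleigh proofs also start from the same Hamdi identity (step (a) of (\ref{poof_C_cond_Naka})), the no-fading case simply being the degenerate instance where $\mathcal{L}_P(s)=e^{-sP}$.
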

\begin{proof}
	The proof is given in Appendix \ref{proof_th3}.
\end{proof}

\section{Numerical Results}
\label{sec_numrst}
In this section, we present selected numerical and simulation to validate our analysis in Section \ref{sec_perays}.
We consider the APs are distributed in a circular region with radius $R=100~\rm m$ and the UE is located at the origin.
The mmWave is assumed to be operated at 73 GHz and the bandwidth is $B=2\;\text{GHz}$.
The transmitting power of mmWave AP is $P_t=30$ dBm.
The normalized noise power is thus $\sigma^2\;(\text{dB})=-174+10\log_{10}(B)+10-P_t$.
The parameters of the directional AP antennas are set as follows:
main lobe gain $M=18$ dB, sidelobe gain $m=-2$ dB, and beamwidth $\theta_b=10^o$ \cite{singh2015tractable}.
Based on \cite{bai2015coverage}, the blockage parameter of LOS probability function $p(r)$ is set to a nominal value of $\beta=0.0071$.
The LOS and NLOS path loss exponents are chosen as $\alpha_{\rm L}=2$ and $\alpha_{\rm N}=4$, while the corresponding coefficients are $C_{\rm L}=C_{\rm N}=10^{-7}$.
The parameters of the Nakagami fading are $N_{\rm L}=3$ and $N_{\rm N}=2$.
The parameter of the Rayleigh fading is $\mu=1$ \cite{Maamari2016coverage}.
The analytical results are computed by numerical evaluation of the expressions derived in Section \ref{sec_perays}, while the simulation results are obtained by averaging over 1000 channel realizations.
\begin{figure}
	\centering
	\includegraphics[width=3.5in]{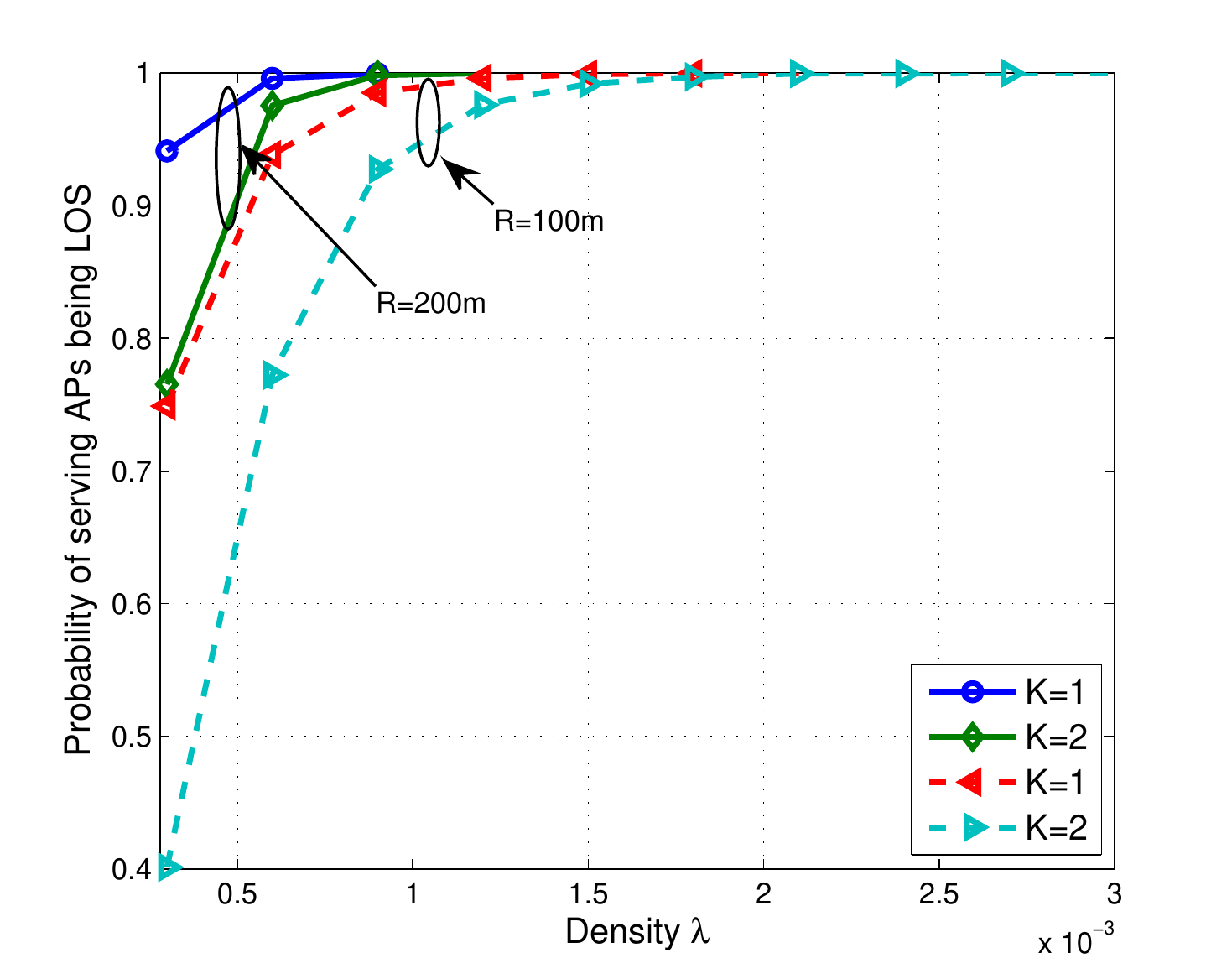}
	\caption{The probability of the serving APs being in LOS with different density $\lambda$ and zone radius $R$.}\label{fig_Los_pr_dsity}
\end{figure}

Recall that the serving APs for the UE are assumed to be in LOS with dense deployment. To justify this assumption, we present the probability of the serving APs being in LOS versus the AP density in Fig.~\ref{fig_Los_pr_dsity}.
From this figure, we can find that the probability increases with density and tends to be 1, which verifies our assumption.
	In addition, the probability increases with the increase of $R$ and decrease of $K$.
	The reason is that the average number of APs $N$ increases greatly with $R$, due to $N=\lambda\pi R^2$.
	Therefore, the distances between the serving APs to the UE are reduced, which increase the probability.

\begin{figure}
	\centering
	\includegraphics[width=3.2in]{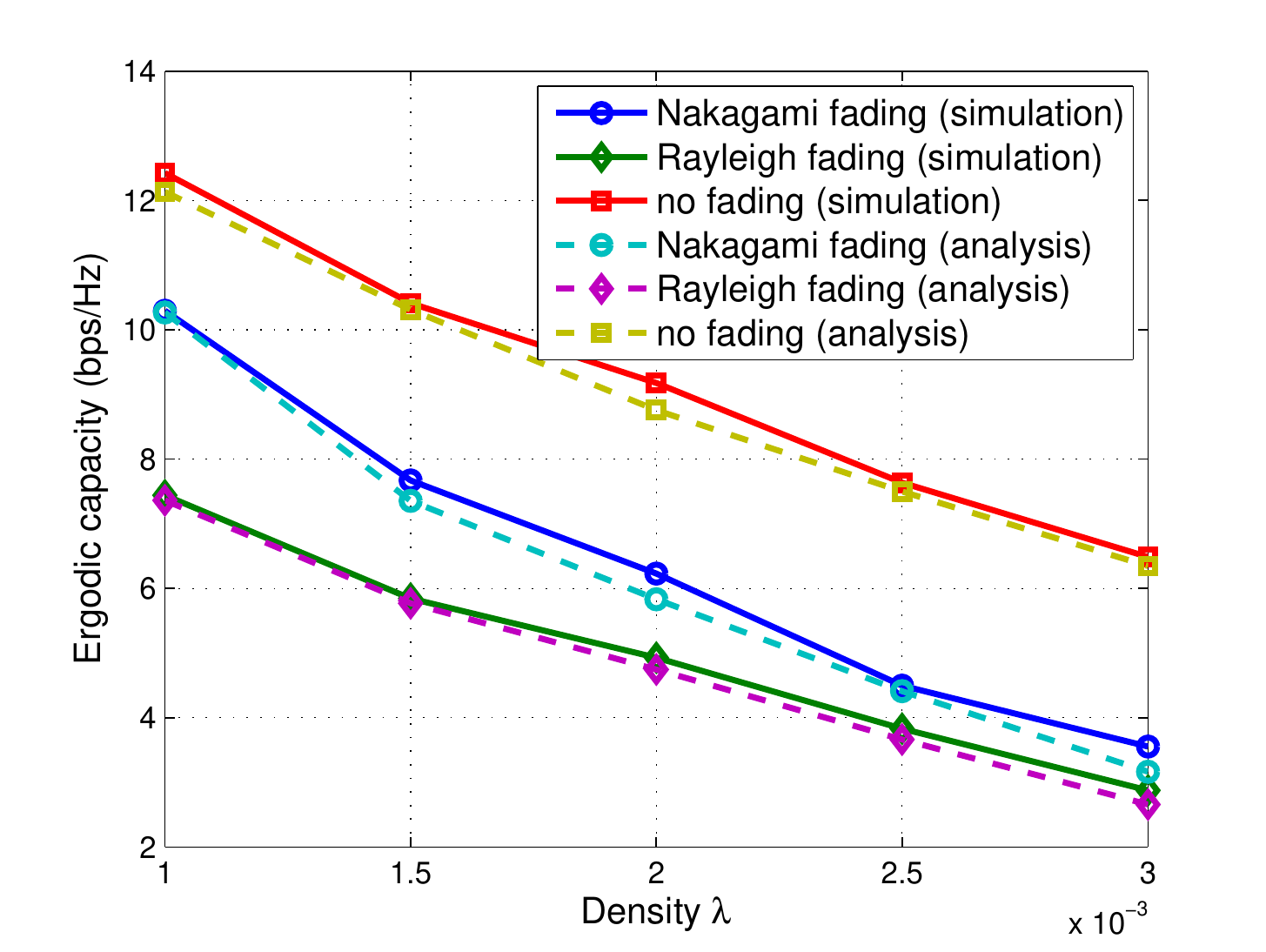}
	\caption{Ergodic capacity versus AP density under three small-scale fading distributions, where $K=2$.\label{fig_Cap_density}}
\end{figure}

\begin{figure}
	\centering
	\includegraphics[width=3.2in]{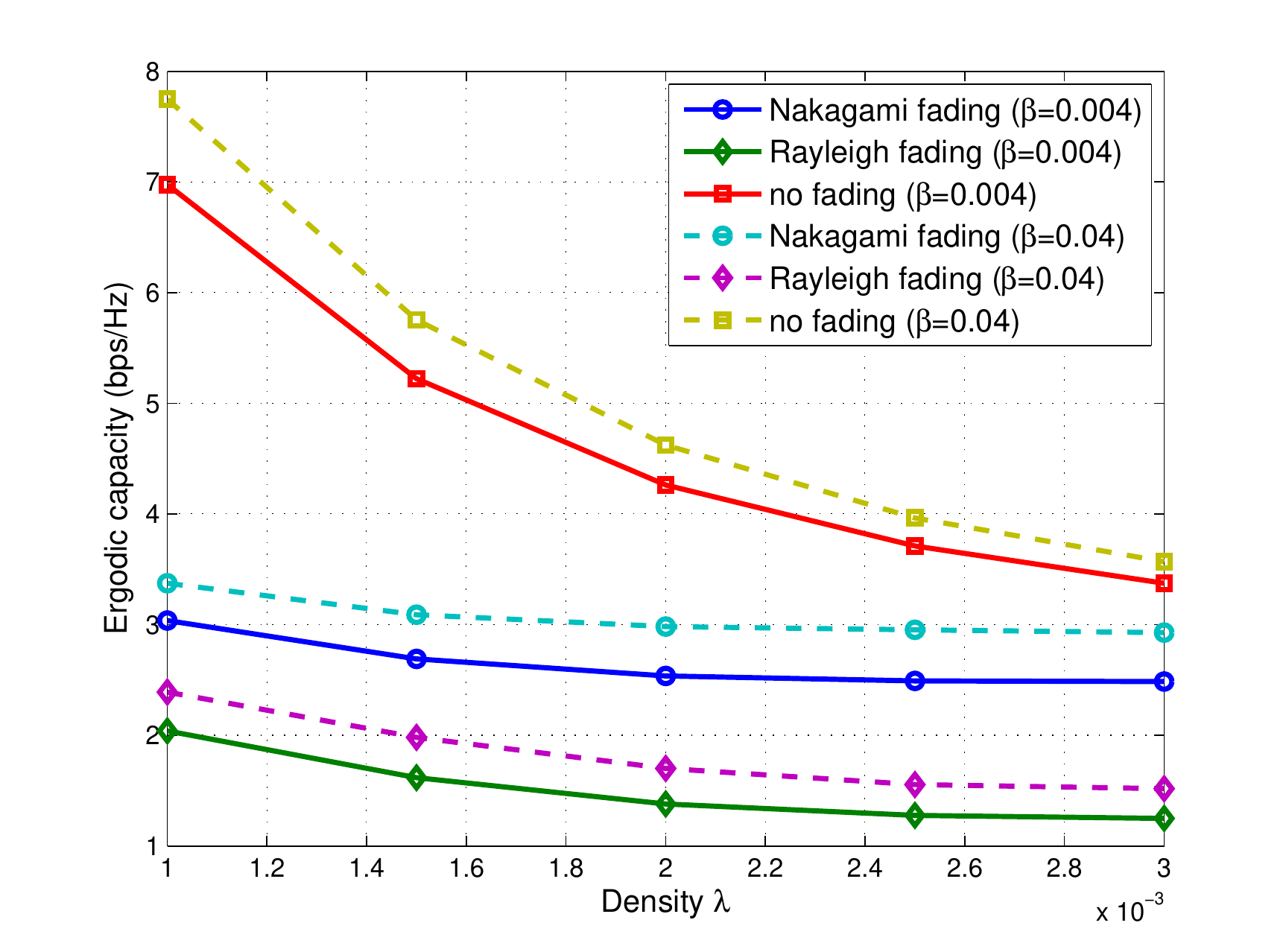}
	\caption{Ergodic capacity versus AP density under three small-scale fading distributions and two blockage parameters, where $K=1$.\label{fig_Cap_blockage}}
\end{figure}

Fig.~\ref{fig_Cap_density} shows the ergodic capacity with AP cooperation (i.e., $K=2$) under three small-scale fading distributions.
Firstly, the simulation results  match well with the analytical ergodic capacity expressions.
Secondly, as expected, the ergodic capacity is always the highest when ignoring the small-scale fading.
It is also interesting to find that the ergodic capacity under Nakagami fading is higher than that under Rayleigh fading.
Thirdly, the ergodic capacity decreases with the AP density $\lambda$, regardless of the fading model.
The reason is that, when the density increases, the number of interfering APs increases, while the number of serving APs remains unchanged.
Thus, the ergodic capacity is degraded.

Fig.~\ref{fig_Cap_blockage} shows the impact of the blockage parameter $\beta$ on the ergodic capacity.
It can be seen that the ergodic capacity increases with $\beta$ under all three fading distributions, which is consistent with \cite{Maamari2016coverage}.
This is because that the blockage probability of the LOS interfering links increases with $\beta$, and so does the number of NLOS APs.
As a result, the total interference power decreases and the ergodic capacity increases.

Fig.~\ref{fig_Cap_K} compares the ergodic capacity achieved under AP cooperation with that achieved without cooperation.
As expected, the scheme with AP cooperation achieves higher ergodic capacity than the one without cooperation.
In addition, AP cooperation provides higher capacity gain when the AP density is low.
This can be explained as: at low AP densities, the desired signal power, and hence the SINR, increase with the number of cooperative APs $K$, which improves the capacity performance.
When the AP density becomes large enough (e.g., $\lambda>0.0025$), the interference power dominates the desired power, offsetting the cooperative gain.

\begin{figure}
	\centering
	\includegraphics[width=3.2in]{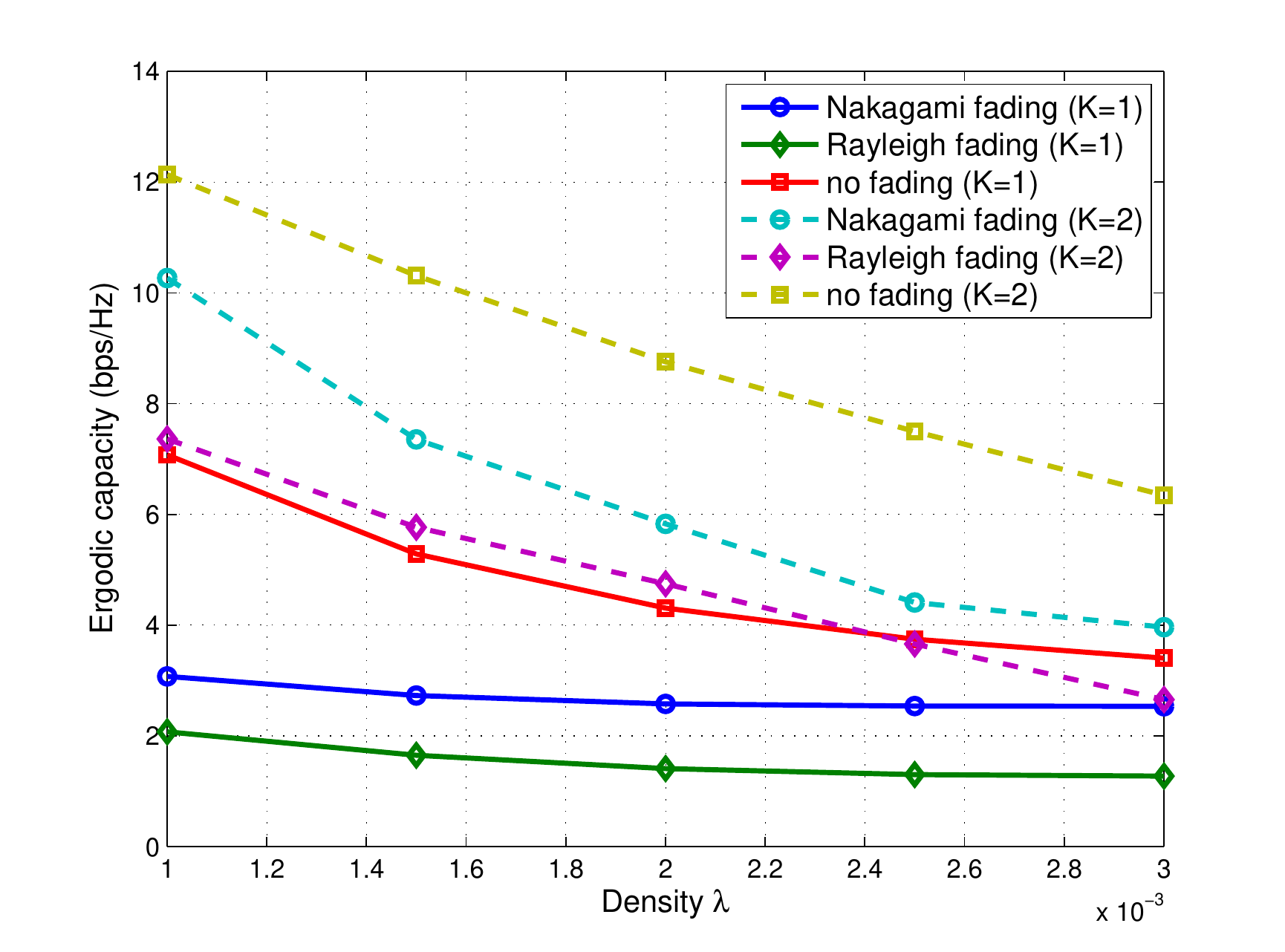}
	\caption{Ergodic capacity with two cooperation APs ($K=2$) and without AP cooperation ($K=1$).\label{fig_Cap_K}}
\end{figure}

\section{Conclusion}
\label{sec_clsn}

In this paper, we analyzed the ergodic capacity of the mmWave user-centric virtual cell dense networks where multiple APs and blockage structures are randomly distributed.
Taking into account the AP location randomness, directional beamforming, distance-dependent path loss and AP cooperation, we derived the analytical expressions for ergodic capacity under three small-scale fading distributions (i.e., Nakagami, Rayleigh and no fading).
Numerical results validated our analysis and showed that AP cooperation can provide distinct capacity gain, especially in a low-AP-density region.

\begin{appendices}
	\section{Proof of Theorem \ref{teom_C_cond_Nakagami}}\label{proof_th1}
	Conditioning on the serving APs being at distances $r_1\leqslant\cdots\leqslant r_K$ from the typical UE and the interfering APs being outside a circle of radius $r_K$, the conditional ergodic capacity is given by
	\begin{align}\label{poof_C_cond_Naka}
	C_{\text{cond}}(\bm{r})&=\mathbb{E}_{\mathbf{G},\bm \xi}\left[\log_2\left(1+\frac{\sum_{k\in\Phi\cap\mathcal{B}(r_K)}MC_{\rm L}r_k^{-\alpha_{\rm L}}|\xi_{k}|^2}{I_{\rm L}+I_{\rm N}+\sigma^2}\right)\right]\nonumber\\
	&\overset{(\rm a)}{=}\frac{1}{\ln2}\mathbb{E}\left[\int_0^\infty\frac{e^{-s\sigma^2}}{s}e^{-s(I_{\rm L}+I_{\rm N})}(1-e^{-sP})\right]\text{d} s\nonumber\\
	&\overset{(\rm b)}{=}\frac{1}{\ln2}\int_0^\infty\mathbb{E}\left[\frac{e^{-s\sigma^2}}{s}e^{-sI_{\rm L}}e^{-sI_{\rm N}}(1-e^{-sP})\right]\text{d} s\nonumber\\
	&\overset{(\rm c)}{=}\frac{1}{\ln2}\int_0^\infty\frac{e^{-s\sigma^2}}{s}\mathcal{L}_{I_{\rm L}}(s)\mathcal{L}_{I_{\rm N}}(s)(1-\mathcal{L}_{P}(s))\text{d} s,
	\end{align}
	where $P=\sum_{k\in\Phi\cap\mathcal{B}(r_K)}MC_{\rm L}r_k^{-\alpha_{\rm L}}|\xi_{k}|^2$, (a) follows from \cite[Lemma 1]{hamdi2008capacity}
	\begin{equation*}
	\ln(1+x)=\int_{0}^{\infty}\frac{1}{z}(1-e^{-xz})e^{-z}\text{d}z
	\end{equation*}
	with $z$ replaced by $s(I_{\rm L}+I_{\rm N}+\sigma^2)$.
	Step (b) is obtained by changing the order of integration and expectation and (c) follows from the fact that $I_{\rm L}$, $I_{\rm N}$ and $P$ are independent.
	$\mathcal{L}_{X}(s)$ is the Laplace Transformation (LT) of the PDF of r.v.$X$.
	
	Now we need to compute the LT of $P$, $I_{\rm L}$ and $I_{\rm N}$ in (\ref{poof_C_cond_Naka}).
	Firstly, let $p_k=MC_{\rm L}r_k^{-\alpha_{\rm L}}|\xi_{k}|^2, k=1,\cdots,K$.
	The LT of $p_k$ can be given by
	\begin{align*}
	\mathcal{L}_{p_k}(s)&=\mathbb{E}[e^{-sp_k}]=\mathbb{E}_{\xi_{k}}\left[e^{-sMC_{\rm L}r_k^{-\alpha_{\rm L}}|\xi_{k}|^2}\right]\\
	&\overset{(\rm a)}{=}\frac{1}{(1+sMC_{\rm L}r_k^{-\alpha_{\rm L}}/N_{\rm L})^{N_{\rm L}}},
	\end{align*}
	where $|\xi_{k}|^2$ is a normalized Gamma r.v. with parameter $N_{\rm L}$ and (a) is obtained by computing its LT.
	Because of the independence among $p_1,\cdots,p_K$, the PDF of $P=\sum_{k=1}^{K}p_k$ is given by
	\begin{equation*}
	f_{P}(z)=f_{p_1}(z)*\cdots*f_{p_K}(z),
	\end{equation*}
	where $*$ is convolution operation and $f_{p_k}(z)$ is the PDF of $p_k$.
	Thus, the LT of $P$ can be obtained as follows,
	\begin{equation}\label{LT_P_Naka}
	\mathcal{L}_P(s)=\prod_{k=1}^K\mathcal{L}_{p_k}(s)=\prod_{k=1}^K\frac{1}{(1+sMC_{\rm L}r_k^{-\alpha_{\rm L}}/N_{\rm L})^{N_{\rm L}}}.
	\end{equation}
	
	Secondly, the LT of $I_{\rm L}$ can be derived as follows,
	\begin{align}\label{LT_IL_Naka}
	\mathcal{L}_{I_{\rm L}}(s)
	&=\mathbb{E}_{\Phi_{\rm L},\mathbf{G},\bm{\xi}}\left[e^{-s\sum_{l\in \Phi_{\rm L}\cap\bar{\mathcal{B}}(0,r_K)}G_{l}C_{\rm L}r_l^{-\alpha_{\rm L}}|\xi_{l}|^2}\right]\nonumber\\
	&\overset{(\rm a)}{=}e^{-2\pi\lambda\sum_{n=1}^{2}b_n\int_{r_K}^{\infty}\left(1-\mathbb{E}_{\xi}\left[e^{-sa_nC_{\rm L}x^{-\alpha_{\rm L}}|\xi|^2}\right]\right)p(x)x\text{d}x}\nonumber\\
	&\overset{(\rm b)}{=}\prod_{n=1}^{2}e^{-2\pi\lambda b_n\int_{r_K}^{\infty}\left(1-1/(1+sa_nC_{\rm L}x^{-\alpha_{\rm L}}/N_{\rm L})^{N_{\rm L}}\right)p(x)x\text{d}x}\nonumber\\
	&=e^{-Q_{\rm L}(s)},
	\end{align}
	where $p(x)$ is the LOS probability function, $a_n$ and $b_n$ are defined in Subsection \ref{subsec_path_beam}; (a) follows from computing the Laplace function of the PPP $\Phi_{\rm L}$ \cite{baccelli2009stochastic}; (b) is obtained by computing the LT of $|\xi|^2$.
	
	In a similar way, for the NLOS interfering links, the small-scale fading term $|\xi_{l}|^2$ is a normalized Gamma r.v. with parameter $N_{\rm N}$.
	Thus, the LT of $I_{\rm N}$ is given by
	\begin{align}\label{LT_IN_Naka}
	&\mathcal{L}_{I_{\rm N}}(s)\nonumber\\
	&=\mathbb{E}_{\Phi_{\rm N},\mathbf{G},\bm{\xi}}\left[e^{-s\sum_{l\in \Phi_{\rm N}\cap\bar{\mathcal{B}}(0,r_K)}G_{l}C_{\rm N}r_l^{-\alpha_{\rm N}}|\xi_{l}|^2}\right]\nonumber\\
	&=e^{-2\pi\lambda\sum_{n=1}^{2}b_n\int_{r_K}^{\infty}\left(1-\mathbb{E}_{\xi}\left[e^{-sa_nC_{\rm N}x^{-\alpha_{\rm N}}|\xi|^2}\right]\right)(1-p(x))x\text{d}x}\nonumber\\
	&=\prod_{n=1}^{2}e^{-2\pi\lambda b_n\int_{r_K}^{\infty}\left(1-1/(1+sa_nC_{\rm N}x^{-\alpha_{\rm N}}/N_{\rm N})^{N_{\rm N}}\right)(1-p(x))x\text{d}x}\nonumber\\
	&=e^{-Q_{\rm N}(s)}.
	\end{align}
	Then, (\ref{fmula_C_cond_Naka}) is obtained by substituting (\ref{LT_P_Naka}) (\ref{LT_IL_Naka}) and (\ref{LT_IN_Naka}) into (\ref{poof_C_cond_Naka}).

	\section{Proof of Theorem \ref{teom_C_cond_Rayleigh}}\label{proof_th2}
	Based on the (\ref{poof_C_cond_Naka}) in Appendix \ref{proof_th1}, we only need to compute the LT of $I_{\rm L}$, $I_{\rm N}$ and $P$ with Rayleigh fading in the following.
	
	Let $g_k=|\xi_{k}|^2$ so that $p_k=MC_{\rm L}r_k^{-\alpha_{\rm L}}g_k, k=1,\cdots,K$.
	\begin{align*}
	\mathcal{L}_{p_k}(s)&=\mathbb{E}_{g_{k}}\left[e^{-sMC_{\rm L}r_k^{-\alpha_{\rm L}}g_k}\right]\\
	&\overset{(\rm a)}{=}\frac{1}{1+\mu sMC_{\rm L}r_k^{-\alpha_{\rm L}}},
	\end{align*}
	where (a) follows from the fact that $g_k\sim \exp(\mu)$ and computing its LT.
	Then, similar to (\ref{LT_P_Naka}), the LT of $P$ with Rayleigh fading is given by
	\begin{equation}\label{LT_P_Ray}
	\mathcal{L}_P(s)=\prod_{k=1}^K\mathcal{L}_{p_k}(s)=\prod_{k=1}^K\frac{1}{1+\mu sMC_{\rm L}r_k^{-\alpha_{\rm L}}}.
	\end{equation}
	
	The LT of $I_{\rm L}$ with Rayleigh fading can be derived as
	\begin{align}\label{LT_IL_Ray}
	\mathcal{L}_{I_{\rm L}}(s)
	&=\mathbb{E}_{\Phi_{\rm L},\mathbf{G},\bm{\xi}}\left[e^{-s\sum_{l\in \Phi_{\rm L}\cap\bar{\mathcal{B}}(0,r_K)}G_{l}C_{\rm L}r_l^{-\alpha_{\rm L}}|\xi_{l}|^2}\right]\nonumber\\
	&\overset{(\rm a)}{=}e^{-2\pi\lambda\sum_{n=1}^{2}b_n\int_{r_K}^{\infty}\left(1-\mathbb{E}_{g}\left[e^{-sa_nC_{\rm L}x^{-\alpha_{\rm L}}g}\right]\right)p(x)x\text{d}x}\nonumber\\
	&\overset{(\rm b)}{=}\prod_{n=1}^{2}e^{-2\pi\lambda b_n\int_{r_K}^{\infty}\left(1-1/(1+\mu sa_nC_{\rm L}x^{-\alpha_{\rm L}})\right)p(x)x\text{d}x}\nonumber\\
	&=e^{-V_{\rm L}(s)},
	\end{align}
	where $a_n$ and $b_n$ are defined in Subsection \ref{subsec_path_beam}; (a) is obtained by computing the Laplace function of the PPP $\Phi_{\rm L}$ \cite{baccelli2009stochastic}; (b) is obtained by computing the LT of $g$.
	
	Similarly, the LT of $I_{\rm N}$ with Rayleigh fading is given by
	\begin{align}\label{LT_IN_Ray}
	&\mathcal{L}_{I_{\rm N}}(s)\nonumber\\
	&=\prod_{n=1}^{2}e^{-2\pi\lambda b_n\int_{r_K}^{\infty}\left(1-1/(1+\mu sa_nC_{\rm N}x^{-\alpha_{\rm N}})\right)(1-p(x))x\text{d}x}\nonumber\\
	&=e^{-V_{\rm N}(s)}.
	\end{align}
	Then, (\ref{fmula_C_cond_Ray}) is obtained by substituting (\ref{LT_P_Ray}) (\ref{LT_IL_Ray}) and (\ref{LT_IN_Ray}) into (\ref{poof_C_cond_Naka}).
	
		\vspace{-1.0em}
	
	\section{Proof of Theorem \ref{teom_C_cond_nofading}}\label{proof_th3}
	Based on (\ref{SINR_nofading}), the conditional ergodic capacity without fading can be derived as
	\begin{align}\label{poof_C_cond_nofading}
	&C_{\text{cond}}(\bm{r})\nonumber\\
	&=\mathbb{E}_{\mathbf{G}}\left[\log_2\left(1+\frac{\sum_{k\in\Phi\cap\mathcal{B}(r_K)}MC_{\rm L}r_k^{-\alpha_{\rm L}}}{I'_{\rm L}+I'_{\rm N}+\sigma^2}\right)\right]\nonumber\\
	&=\frac{1}{\ln2}\int_0^\infty\frac{e^{-s\sigma^2}}{s}\mathcal{L}_{I'_{\rm L}}(s)\mathcal{L}_{I'_{\rm N}}(s)\left(1-\prod_{k=1}^{K}e^{-sMC_{\rm L}r_k^{-\alpha_{\rm L}}}\right)\text{d} s.
	\end{align}
	The detailed derivations of $\mathcal{L}_{I'_{\rm L}}(s)$ and $\mathcal{L}_{I'_{\rm N}}(s)$ are similar to (\ref{LT_IL_Naka}) and (\ref{LT_IN_Naka}), respectively.
	Thus, we only present the final results as follows,
	\begin{align}
	\mathcal{L}_{I'_{\rm L}}(s)
	&=e^{-2\pi\lambda\sum_{n=1}^{2} b_n\int_{r_K}^{\infty}\left(1-e^{-sa_nC_{\rm L}x^{-\alpha_{\rm L}}}\right)p(x)x\text{d}x}\nonumber\\
	&=e^{-W_{\rm L}(s)},\label{LT_IL_nofading}\\
	\mathcal{L}_{I'_{\rm N}}(s)
	&=e^{-2\pi\lambda\sum_{n=1}^{2} b_n\int_{r_K}^{\infty}\left(1-e^{-sa_nC_{\rm N}x^{-\alpha_{\rm N}}}\right)(1-p(x))x\text{d}x}\nonumber\\
	&=e^{-W_{\rm N}(s)}.\label{LT_IN_nofading}
	\end{align}
	Then, (\ref{fmula_C_cond_nofading}) is obtained by substituting (\ref{LT_IL_nofading}) and (\ref{LT_IN_nofading}) into (\ref{poof_C_cond_nofading}).
	
\end{appendices}


\section*{Acknowledgment}

This work was supported by the National Science and Technology
Major Project (2016ZX03001016-003) and the National Natural
Science Foundation of China under grant Nos. 61372106 and
61221002.
J. Shi acknowledges the support of the China Scholarship Council for a 1-year fellowship to McGill University.
B. Champagne acknowledges the financial support of NSERC of Canada.



%
\bibliographystyle{IEEEtran}
\bibliography{IEEEabrv,MMM_PAmmWave}

\end{document}